\newtheorem{prop}{Proposition}
\newcommand{\ket}[1]{\left|#1\right>}
\newcommand{\bra}[1]{\left<#1\right|}
\newcommand{\braket}[2]{\left<#1|#2\right>}
\newcommand{\zket}[1]{\left|#1\right>\!{\mathstrut}_0}
\newcommand{\zbra}[1]{{\mathstrut}_0\!\left<#1\right|}
\newcommand{\rmd}{{\rm d}}
\def \ap {\alpha^\prime}
\def \metric {\mathcal{G}}
\def \phys {{\rm phys}}
\def \ii {i}
\def \dd {\partial}
\def \QB  {Q_{\! B}} 
\def \tQB {\widetilde{Q}_{\! B}}
\newcommand{\tket}[1]{\!\left.\vphantom{#1}\right|\!\tilde{#1}\!\left>\vphantom{#1}\right.\!}
\def \nn {\nonumber \\}
\title{
No-Ghost Theorem for\\ Neveu-Schwarz String in 0-Picture
}
\author{
Maiko \textsc{Kohriki},\footnote{%
E-mail:\  {\tt kohriki@yukawa.kyoto-u.ac.jp}} %
Hiroshi \textsc{Kunitomo}\footnote{%
E-mail:\  {\tt kunitomo@yukawa.kyoto-u.ac.jp}}
and %
Masaki \textsc{Murata}\footnote{%
E-mail:\  {\tt masaki@yukawa.kyoto-u.ac.jp}}%
}
\begin{document}

\maketitle


\section{Introduction}\label{sec:introduction}

String field theory\cite{Witten:1985cc,HIKKO,Neveu:1986mv} (SFT) is 
one of the promising candidates to provide a nonperturbative definition 
of string theory. For the bosonic cubic SFT\cite{Witten:1985cc}, in fact, 
an analytic solution describing tachyon condensation was found\cite{Schnabl:2005gv}
and used to prove\cite{Ellwood:2006ba} Sen's conjecture.\cite{Sen:1999mh}
Successively, many other analytic solutions were also found 
and used to study nonperturbative phenomena in string theory.\cite{Okawa:2006vm}\tocite{Hellerman:2008wp}

On the other hand, superstring field theory (SSFT) is not well understood so far.
The original cubic superstring field theory\cite{Witten:ssft} 
based on the Neveu-Schwarz-Ramond (NSR) formulation has the divergent contact term problem.\cite{Wendt:1987zh}
This problem is caused by colliding picture-changing operators inserted 
at the interaction point and breaks the gauge invariance. 
Then, in order to avoid this problem, a modified formulation based on the Neveu-Schwarz (NS) 
string field in 0-picture was proposed.\cite{Preitschopf:1989fc,Arefeva:1989cm}\footnote{
There is another candidate of SSFT, which is free from the contact term problem.\cite{Berkovits:1995ab} 
In this formulation, the picture-changing operator is absent but the gauge-invariant action is 
nonpolynomial. Some analytic solutions for marginal deformations in this SSFT were also 
found.\cite{Erler:2007rh}\tocite{Kiermaier:2007ki}}
Several of techniques of solving the classical equation of motion developed in the bosonic cubic SFT 
can also be applicable to this modified cubic SSFT. Actually, a number of analytic solutions have been 
obtained.\cite{Erler:2007xt}\tocite{Aref'eva:2010yd}
However, the gauge-fixing procedure of the modified cubic SSFT is not so clear owing to the kernel
of the picture-changing operator inserted in the kinetic term of the NS string field.
In addition, it has recently been pointed out that
there is another difficulty in the modified cubic SSFT.\cite{Kroyter:2009zi}
We eventually encounter the same divergence as the contact term problem
by repeating the gauge transformation with a Ramond gauge parameter. 
Hence, the finite gauge transformation cannot be defined.
This divergence is inevitable as long as the conventional picture-changing operator is used.

Here, we focus on the first difficulty. 
One of the reasons for this difficulty is that the physical states of the 0-picture NS string 
are not properly clarified yet. They are known only indirectly through the relation with 
those in the conventional ($-1$)-picture.
In this paper, we therefore study the structure of the physical Hilbert space of NS string 
and prove the no-ghost theorem directly in 0-picture. 
As its by-product, we find a new inverse picture-changing operator,
which is noncovariant but has a finite operator product with itself.
We propose a new SSFT using this new operator as a possible solution to 
the second difficulty. The gauge transformation in the new formulation
can be integrated to the finite gauge transformation without any obstruction.

The modern BRST quantization of the bosonic string theory was first performed 
by Kato and Ogawa.\cite{Kato:1982im} They proved that the physical Hilbert space
is isomorphic to the conventional one spanned by DDF states\cite{DDF:1971fp} 
and, hence, has a positive semidefinite norm. 
This norm of string states is, however, a little nontrivial owing to the ghost zero-modes.
While we concentrate on the NS string in this paper,
a similar no-ghost theorem was also proved for the NSR superstring.\cite{Ohta:1985af,Ito:1985qa}
In these papers, string states are constructed on the oscillator ground state
annihilated by negative frequency modes as usual. The no-ghost theorem can be proved 
in a completely parallel way to the case of the bosonic string.\cite{Kato:1982im} 
However, it is known that this ground state is not unique but there are infinitely 
many ground states labeled by an integer, so-called picture number.
They provide inequivalent representations of the canonical commutation relation for
the superconformal ghosts.\cite{Friedan:1985ge}
The conventional ground state is the one in ($-1$)-picture, called natural picture in this paper. 
The modified cubic SSFT is constructed using this degree of freedom.
The NS string field is defined using the Fock space in 0-picture. 
By construction, it is guaranteed that the correct on-shell physical amplitudes
are reproduced because they are independent of the picture chosen to be computed. 
However, the total Hilbert space in 0-picture, that is, the space of the off-shell string field,
seems to be quite different from the one in natural picture.
It is, therefore, worthwhile to prove the no-ghost theorem directly in 0-picture. 
We expect that it sheds light on the gauge-fixing problem of the modified cubic SSFT.

This paper is organized as follows. 
In \S\ref{sec:0pps}, the physical state conditions of the NS string are studied 
directly in 0-picture. We construct the physical states in a similar way to the bosonic 
string\cite{Kato:1982im} and show that there is a one-to-one correspondence between 0-picture 
physical states and conventional ones in natural picture. 
The no-ghost theorem is proved in \S\ref{sec:norm}. 
A natural norm of 0-picture states is induced from an inner product defined by introducing 
a metric consistent with the BRST cohomology. Then, we prove that the physical states have positive 
semidefinite norms. A discussion is given in \S\ref{sec:disc}.
We find a new inverse picture-changing operator and propose a new consistent SSFT. 
This new formulation with the new picture-changing operator is unfortunately noncovariant but 
still has gauge invariance. In addition, the new gauge transformation can be integrated to 
the finite transformation without any obstruction.
The oscillator conventions are summarized in Appendix \ref{app:one}.
Appendix \ref{app:two} is devoted to discussing the oscillator ground state and its inner product
in general picture.

\section{Physical states in 0-picture}\label{sec:0pps}

In this section, we investigate the NS string in 0-picture. After describing Fock states, 
the physical state condition is examined in detail. The physical states can be constructed 
by a similar method to that used for the bosonic string.\cite{Kato:1982im} 
We explicitly give the well known map between the physical states between 0- and ($-1$)-pictures.

\subsection{Physical state condition in 0-picture}\label{sec:0pps1}

Let us first separately consider the representations of
ghost zero-modes $(b_0,c_0)$ and the other oscillators. 
For the ghost zero-modes, 
we take a two-dimensional representation 
$(\ket{\downarrow},\ket{\uparrow})$ defined by\cite{Kato:1982im} 
\begin{subequations}
\begin{alignat}{2}
b_0 \ket{\downarrow}&=0,\qquad&  b_0 \ket{\uparrow}&=\ket{\downarrow}\,,\\
c_0 \ket{\downarrow}&=\ket{\uparrow},\qquad&  c_0 \ket{\uparrow}&=0\,.
\end{alignat}
\end{subequations}
The ground state of the NS string is, therefore, doubly degenerated.
For the other oscillators, the ground state $\ket{0}_0$ in 0-picture is defined by
\begin{align}\label{eq:0vac}
&\alpha_n^\mu \ket{0}_0 = b_n \ket{0}_0 = c_n \ket{0}_0 = 0 
\qquad {\rm for} \quad n > 0 \,, \nn
&\psi^\mu_r \ket{0}_0 =0 \qquad {\rm for} \quad r > 0 \,, \nn
&\beta_r \ket{0}_0 = 0 \qquad {\rm for}\quad r \ge -\tfrac12 \,,\nn
&\gamma_r \ket{0}_0 = 0 \qquad {\rm for}\quad r\ge \tfrac32 \,.
\end{align}
Combining with the momentum eigenstate, $p^\mu \ket{k} = k^\mu\ket{k}$,
it is useful to define 
\begin{equation}
 \zket{0,k}=\zket{0}\otimes\ket{k}.
\end{equation}
Here, it should be noted that canonical conjugate modes $\beta_{-\frac12}$ and $\gamma_{\frac12}$ 
exchange their role in this picture. 
The positive mode $\gamma_\frac12$ does not annihilate the ground state $\zket{0}$ but
the negative mode $\beta_{-\frac12}$ does.
The NS string state is in total written in the form of the direct product
\begin{equation}
\zket{\phi} = \tket{\phi} \otimes 
              \left\{ \ket{\downarrow}\text{or}\ket{\uparrow} \right\} \,,
\end{equation}
where $\tket{\phi}$ denotes a state in the Fock space $\mathcal{F}$ 
constructed on $\zket{0,k}$.

The physical state condition of the NS string\cite{Ohta:1985af,Ito:1985qa} is
then given by
\begin{subequations}
\begin{align}
 b_0\zket{\phys}&=0\,,\label{def:Siegel-gauge}\\
 \QB \zket{\phys}&=0\,,\label{def:physical_state}
\end{align}
\end{subequations}
with the BRST charge
\begin{subequations}
\begin{equation}
\QB = c_0 L + b_0 M + \sqrt{2\ap} \tQB \,,
\end{equation}
where
\begin{align}
L &= L_0^{(m)}
 +\sum_{n\ne0} n :b_{-n} c_n: + \sum_r r :\beta_{-r} \gamma_r:-1 \,,
\\
M &= -\sum_{n\ne0} n c_{-n}c_n - \sum_{r} \gamma_{-r}\gamma_r \,,
\\
\sqrt{2\ap} \tQB &= 
\sum_{n\neq0} c_{-n} L_n^{(m)} + \sum_r \gamma_{-r} G_r^{(m)} 
+ \sum_{\substack{n,m\neq0 \\ n+m\neq0}} \frac12(n-m) b_{-n-m}c_n c_m \nn
&\hspace{3cm}
+ \sum_{n\neq0} \sum_r \left( \frac12(2r-n) \beta_{-n-r}c_n
\gamma_r - b_{-n}\gamma_{n-r} \gamma_r \right) \,,
\end{align}
with
\begin{align}
L^{(m)}_n&=\frac12\sum_m:\alpha^\mu_{n-m}\alpha_{\mu m}:+\sum_r\frac14(2r-n):\psi^\mu_{n-r}\psi_{\mu r}:\,,\\
G^{(m)}_r&=\sum_n\psi^\mu_{r-n}\alpha_{\mu n}\,.
\end{align}
\end{subequations}
The normal ordering $:\ :$ is taken with respect to the 0-picture ground state (\ref{eq:0vac}).
The first condition (\ref{def:Siegel-gauge}), corresponding to the Siegel gauge condition in the context
of SSFT, imposes $\zket{\phys}=\tket{\phi}\otimes \ket{\downarrow}$. 
The second condition (\ref{def:physical_state}) on this state becomes
\begin{subequations}\label{def:phys_state_cond}
\begin{align}
L\tket{\phi} = 0 \,, \label{cond:mass-shell}\\
\tQB \tket{\phi}  =0 \,. \label{def:nonzero-phys_state}
\end{align}
\end{subequations}
In what follows, we first define the subspace $\mathcal{V}_L$ satisfying the on-shell condition (\ref{cond:mass-shell}).
Then, general physical states are constructed by solving the condition (\ref{def:nonzero-phys_state}) 
within this subspace.

\subsection{Construction of physical states}   

In the beginning, the ground state $\zket{0,k}$ satisfies (\ref{cond:mass-shell})
if the tachyonic on-shell condition $\ap k^\mu k_\mu=1$ holds.
We introduce the light-cone coordinates $u_\pm = \frac{1}{\sqrt2} (\pm u_0 + u_9)$ 
and choose a Lorentz frame in which the transverse momentum $k_i\ (i=1,\cdots,8)$ 
is equal to zero. In this frame, 
the on-shell momentum for the ground state can be written as 
\begin{equation}
 k^\mu=k^\mu_0\equiv
\left(k^+\,,\  k^-=\frac{1}{2\ap k^+}\,,\ k_i=0\right)\,,
\end{equation}
where $k^+\ne0$ is assumed. The subspace $\mathcal{V}_L$ is constructed on this on-shell
ground state $\zket{0,k_0}$
as the Fock space spanned by
the rescaled oscillators
\begin{equation}\label{def:rescaling-oscillators}
 \hat{\varphi}_{-q}=e^{-i\frac{q}{2\ap p^+}x^+}\varphi_{-q}\,,
\end{equation}
where $\varphi_{-q}$ denotes oscillators $\alpha^\mu_{-n},\ b_{-n},\ c_{-n},\ 
\psi^\mu_{-r},\ \beta_{-r}$ or $\gamma_{-r}$.
These new oscillators commute with $L$ and, hence, the states created 
by them on $\zket{0,k_0}$ automatically satisfy (\ref{cond:mass-shell}). 

Let us next examine the second condition (\ref{def:nonzero-phys_state}).
Following the conventional method\cite{Kato:1982im,Ohta:1985af,Ito:1985qa},
we take another rescaling
\begin{equation}
\ap \to \frac{\ap}{\hbar^2} \,, \qquad 
p^- \to \hbar^2 p^- \,, \qquad 
x^+ \to \frac{x^+}{\hbar^2}\,,\label{def:hbarrescaling}
\end{equation}
introducing a formal expansion parameter $\hbar$. 
As a consequence of this rescaling, $\tQB$ is separated into three pieces by the order of $\hbar$ 
as 
\begin{subequations}
\begin{align}
\tQB(\hbar) &= A + \hbar B + \hbar^2 C \,, \label{def:rescaledqbtilde}\\
A &= \sum_{n\ne0} p^+c_{-n}\alpha^-_n 
+\sum_rp^+\gamma_{-r}\psi^-_r \,,
\label{def:tQBA}\\
B &= 
\frac{1}{2\sqrt{2\ap}} \Bigg(
\sum_{\substack{n,m\neq0 \\ m\neq n}}  
 \Big( c_{-n}\alpha^\mu_{n-m}\alpha_{\mu m} + (n-m) b_{n-m}c_nc_m \Big) \nonumber\\
&\hspace{15mm}
+ \sum_{\substack{n\neq0\\r}}
 \Big( \frac12(2r-n)  c_{-n}\psi_{n-r}^\mu\psi_{\mu r} + (2r+n)  c_{-n}\beta_{n-r} \gamma_r  \nonumber\\
&\hspace{55mm}
+ 2 \gamma_r\alpha_{-n}^\mu\psi_{\mu n-r} -2 b_{-n}\gamma_{n-r}\gamma_r
\Big)
\Bigg) \,,
\\
C &= \sum_{n\ne0} p^-c_{-n}\alpha^+_n
+\sum_rp^-\gamma_{-r}\psi^+_r \,.
\end{align}
\end{subequations}
Here, the symbol\hspace{2mm}  $\hat{}$\hspace{2mm} on the oscillators
is omitted because $\widetilde{Q}_B$ is invariant under the oscillator rescaling
(\ref{def:rescaling-oscillators}). 
If we assume that the state $\tilde{\ket{\phi}}$ can also be expanded as
$|\tilde{\phi}(\hbar)\rangle = \sum_{n=0}^\infty \hbar^n \tilde{\ket{\phi}}^{(n)}$ and
the condition
\begin{equation}
 \tQB(\hbar)|\tilde{\phi}(\hbar)\rangle=0
\end{equation}
holds order by order in $\hbar$, equations
\begin{equation}
A \tilde{\ket{\phi}}^{(n)} + B\tilde{\ket{\phi}}^{(n-1)} + C\tilde{\ket{\phi}}^{(n-2)} = 0 \,,
\qquad \textrm{for}\qquad n\ge0\,,\label{eq:cohomology}
\end{equation}
are obtained with the understanding that $\tilde{\ket{\phi}}^{(n)}\equiv0$ for $n<0$. 
The physical states can be obtained by solving Eq. (\ref{eq:cohomology}) iteratively.

Let us begin with finding the physical tachyon state.
In the lowest order, Eq. (\ref{eq:cohomology}) is simply
\begin{equation}
 A\tilde{\ket{\phi}}^{(0)}=0\,.\label{eq:cohomology0}
\end{equation}  
The on-shell ground state $\zket{0,k_0}$ does not satisfy this condition   
because it is not annihilated by $\gamma_{\frac12}$ in 0-picture:
\begin{equation}
A \zket{0,k_0} = k^+ \gamma_{\frac12} \psi^-_{-\frac12} \zket{0,k_0} \neq 0 \,.
\end{equation}
The physical tachyon state is in the first excited level and given in this order by
\begin{equation} 
\zket{{\rm tach}}^{(0)}=\hat{\psi}^-_{-\frac12}\zket{0,k_0} \, .\label{def:tachyon^0}
\end{equation}  
Equation (\ref{eq:cohomology}) can then be solved in turn.  
The physical tachyon state, with rescaling (\ref{def:hbarrescaling}), is 
finally obtained as
\begin{align} \label{def:tach-hbar}
\zket{\rm tach(\hbar)}&=\left(
\hat{\psi}^-_{-\frac12} 
- \frac{\hbar}{\sqrt{2\ap}k^+}\hat{b}_{-1}\hat{\gamma}_{\frac12} 
+ \frac{\hbar^2}{4\ap (k^+)^2} \hat{\psi}^+_{-\frac12}
\right) \zket{0,k_0} \,,
\nn
&= \left(
\psi^-_{-\frac12} 
- \frac{\hbar}{\sqrt{2\ap}k^+}b_{-1}\gamma_{\frac12} 
+ \frac{\hbar^2}{4\ap (k^+)^2} \psi^+_{-\frac12}
\right) \zket{0,k_1} \,,
\end{align} 
where the momentum $k_1$ is given by
\begin{equation}
 k_1^\mu=\left(k^+,\frac{1}{4\ap k^+},0,\cdots,0\right)\,,
\end{equation}
and satisfies the correct on-shell condition for the physical tachyon, $\ap k_1^\mu k_{1\mu}=1/2$.

%
%
In the lowest order, we can easily see that general physical states can be constructed on 
$\zket{\rm tach}^{(0)}$ as
\begin{equation}
a^{i_1\dag}_{n_1}a^{i_2\dag}_{n_2}\dotsm \boldsymbol{\psi}^{j_1\dag}_{r_1}
\boldsymbol{\psi}^{j_2\dag}_{r_2}\dotsm \zket{\rm tach}^{(0)}\,,\label{def:physicalstates}
\end{equation}
because the transverse oscillators commute with $A$.
We introduce here the creation operators
\begin{subequations}
\begin{alignat}{2}
 a_n^{i\dag}&=\frac{\hat{\alpha}_{-n}^i}{\sqrt{n}}\,,&\qquad &(n>0)\\
 \boldsymbol{\psi}_r^{i\dag}&=\hat{\psi}^i_{-r}\,,&\qquad &(r>0)
\end{alignat}
\end{subequations}
for convenience.
As in the case of the tachyon, we can iteratively solve Eq. (\ref{eq:cohomology})
in principle. 
In fact, however, it is unnecessary because the problem is the same as
the one in natural picture if we take the complete tachyon state $\zket{{\rm tach}(\hbar)}$
as the oscillator \textit{ground} state\footnote{
We will see in \S\ref{sec:norm} that the tachyon state should be reinterpreted as the ground 
state also from the study of the quartet structure.}
without expanding in $\hbar$.
The solutions can be immediately written as
\begin{equation} \label{def:0-pic_phys_state}
A^{i_1\dag}_{n_1}(\hbar)A^{i_2\dag}_{n_2}(\hbar)\dotsm B^{j_1\dag}_{r_1}(\hbar)
B^{j_2\dag}_{r_2}(\hbar)\dotsm \zket{\rm tach(\hbar)}\,,
\end{equation}
with the well known DDF operators\cite{Brower:1973iz,Goddard:1972iy,Schwarz:1974ix} 
\begin{subequations}\label{def:DDF}
\begin{align}
A^{i\dag}_n (\hbar)
&=\frac{1}{\sqrt{n}}\oint\frac{\rmd z}{2\pi\ii} z^{-n} \left(i\partial X^i-\frac{n\hbar}{\sqrt{2\ap}p^+}\psi^+\psi^i\right)
e^{-i\frac{n\hbar}{\sqrt{2\ap}p^+}X^+(z)}\,,\quad (n>0)
\\
B^{i\dag}_r(\hbar)
&=\frac{1}{\sqrt{\alpha_0^+}} \oint\frac{\rmd z}{2\pi\ii} z^{-r} \left(
\psi^i(i\partial X^+)^{\frac12}-\frac{\psi^+i\partial X^i}{(i\partial X^+)^{\frac12}}
+\frac12 \frac{\psi^i\psi^+\partial\psi^+}{(i\partial X^+)^{\frac32}}
\right)e^{-i\frac{r\hbar}{\sqrt{2\ap}p^+}X^+(z)}\,,\nonumber\\
&\hspace{10.2cm} (r>0)
\end{align}
\end{subequations}
rescaled by (\ref{def:hbarrescaling}).\footnote{
Here, the fractional power of $i\partial X^+$ is defined by expansion around the zero mode
$\alpha^+_0=\sqrt{2\ap}p^+/\hbar$ by assuming $p^+\ne0$.}
One can easily see that the states (\ref{def:0-pic_phys_state}) are really desired physical states
by noting that the DDF operators (\ref{def:DDF}) are (anti-) commutative with $\tQB(\hbar)$
\begin{equation}
 [\tilde{Q}_B(\hbar),A^{\dag i}_n(\hbar)]=\{\tilde{Q}_B(\hbar),B^{\dag i}_r(\hbar)\}=0,
\end{equation}
and satisfy the initial condition 
$A^{i\dag}_n(0)=a^{i\dag}_n$ and $B^{i\dag}_r(0)=\boldsymbol{\psi}^{i\dag}_r$.
It is also noted here that the DDF operators satisfy canonical 
(anti-)commutation relations,
\begin{equation}\label{def:ccr}
\left[ A_n^i(\hbar)\,, A_m^{j\dagger}(\hbar) \right] = 
\left\{ B_r^i(\hbar)\,, B_s^{j\dagger}(\hbar) \right\} = \delta_{i,j}\delta_{n,m} \,.
\end{equation}

Because the physical states (\ref{def:0-pic_phys_state}) appear similar to
the conventional ones in natural picture, one may consider the relation between them.
As a matter of fact, 
they must be identical because the physical spectrum is independent of the picture.
We can answer this question with the help of the well known
picture-changing operator\footnote{
A precise definition of this delta function is discussed in Appendix \ref{app:two}.}
\begin{align}\label{def:pco_X}
X(z) &= \{\QB\,, \Theta(\beta(z))\} \nonumber \\
 &=G(z)\delta(\beta(z))-\partial b\delta'(\beta(z)),
\end{align}
where
\begin{equation}
 G(z)=G^{(m)}(z)+c(z)\partial\beta(z)+\frac32\partial c(z)\beta(z)-2\gamma(z)b(z).
\end{equation}
Using $X(z)$, rescaled by (\ref{def:hbarrescaling}),
one can show that
\begin{align}
\sqrt{2\ap} k^+ \zket{\rm tach(\hbar)} 
&= \left( 
\sqrt{2\ap} p^+ \hat{\psi}^-_{-\frac12} 
- \hbar \hat{b}_{-1} \hat{\gamma}_{\frac12} 
+ \hbar^2 \sqrt{2\ap} p^- \hat{\psi}^+_{-\frac12} 
\right)\delta(\hat{\beta}_{-\frac12}) \ket{0,k_1}_{-1} \nonumber\\
&= \hbar X(0) \ket{0,k_1}_{-1} \,,
\end{align}
where $\ket{0,k_1}_{-1}=\delta(\gamma_{\frac12})\zket{0,k_0}$
is the physical tachyon in $(-1)$-picture. 
Considering the fact that $X(z)$ commutes with the DDF operators,
this gives the well known map between the physical states in 0- and $(-1)$-pictures.\cite{Friedan:1985ge}
In order to prove the no-ghost theorem, however, we must also demonstrate that 
all the physical states have the form of (\ref{def:0-pic_phys_state}),
which is far from trivial.

\section{No-ghost theorem}\label{sec:norm}

\subsection{Definition of a norm for $0$-picture states}

Let us first define an inner product in a manner consistent with the physical state 
condition (\ref{def:phys_state_cond}).
As explained in Appendix \ref{app:two} in detail, a nontrivial metric $\metric$
is needed to define a consistent inner product with the BRST cohomology.
The inner product between two 0-picture states 
$\zket{\alpha}$ and $\zket{\beta}$ is defined using the nontrivial metric $\metric$ by 
$\zbra{\alpha}\metric\zket{\beta}$, which naturally induces a norm of $\zket{\alpha}$ as
\begin{equation}
  \Big\|\zket{\alpha}\Big\|=\zbra{\alpha}\metric\zket{\alpha}\,.
\end{equation}
The first task is therefore to find such a metric
that satisfies the conditions 
\begin{subequations}\label{cond:metric}
\begin{align}
[L\,, \metric] &= 0 \,, \label{cond:metric-L}\\
[\tQB\,, \metric] &= 0 \,,  \label{cond:metric-tQB}\\ 
[A^{i\dag}_n,\metric]=0\,,\quad&\quad [B^{i\dag}_r,\metric]=0\,, \label{cond:metric-DDF}\\
\zbra{\rm tach} \metric \zket{\rm tach}&\gneqq0 \,. \label{cond:metric-norm}
\end{align}
\end{subequations}
The first two, (\ref{cond:metric-L}) and (\ref{cond:metric-tQB}), are
required for the consistency with the physical state condition (\ref{def:phys_state_cond}).
In particular, the condition \eqref{cond:metric-tQB} guarantees that $\tQB$-exact states
have zero norm and are orthogonal to any physical state. 
The last two, \eqref{cond:metric-DDF} and \eqref{cond:metric-norm}, ensure that the physical states
\eqref{def:0-pic_phys_state} have positive semidefinite norms.

Similar to the analysis of the physical states in the previous section, we assume that
the metric $\metric$ can also be expanded in $\hbar$ as
$\metric(\hbar) = \sum_{n=0}^\infty \hbar^n \metric^{(n)}$\,.
Consequently, the conditions (\ref{cond:metric}) are also expanded in $\hbar$ and must 
hold in each order of $\hbar$. 
In particular, from the second condition \eqref{cond:metric-tQB}, we obtain
\begin{equation} 
[A\,, \metric^{(n)}]+[B\,, \metric^{(n-1)}]+[C\,, \metric^{(n-2)}]=0,\qquad
{\rm for}\qquad n\ge0\,,
\label{cond:metric-tQB-hbar-n}
\end{equation}
with the understanding that $\metric^{(n)}\equiv0$ for $n<0$.
This equation becomes simply 
\begin{equation}
[A\,,\metric^{(0)}]=0\,,  
\end{equation}
for $n=0$ and is satisfied by
\begin{equation}
 \metric^{(0)}=\psi^+_{-\frac12}\delta(\gamma_{-\frac12})
\delta(\gamma_\frac12)\psi^+_\frac12\,.\label{def:metric0}
\end{equation}
This is the most natural choice also satisfying (\ref{cond:metric-L}) and (\ref{cond:metric-norm})
at the lowest order. Note here that this metric has a nontrivial kernel and therefore 
is degenerate. This degeneracy does not produce any difficulties but is necessary for consistency
as explained later.  
Starting with this lowest-order solution (\ref{def:metric0}),
we can solve Eq. (\ref{cond:metric-tQB-hbar-n}) one after another.
The result is eventually written in a closed form as
\begin{subequations}\label{def:metric_closed_form}
\begin{equation} \label{def:metric}
\metric(\hbar) = \Psi^+_{-\frac12}(\hbar) \, \delta\!\left(\Gamma_{-\frac12}(\hbar)\right) 
\delta\!\left(\Gamma_{\frac12}(\hbar)\right) \, \Psi^+_{\frac12}(\hbar) \,,
\end{equation}
with
\begin{align}
\Psi^+_{\frac12}(\hbar) &= \left(\Psi^+_{-\frac12}(\hbar)\right)^\dag\nonumber\\
&=\frac{1}{\left(\alpha_0^+\right)^\frac12} \oint \frac{\rmd z}{2\pi\ii} 
z^{\frac12}\psi^+(z) \left( \ii\dd X^+(z) \right)^{\frac12} \,, \\
\Gamma_{\frac12}(\hbar) 
&= \left(\Gamma_{-\frac12}(\hbar)\right)^\dag\nn
&=\frac{1}{\left(\alpha_0^+\right)^\frac32} \oint \frac{\rmd z}{2\pi\ii} 
z^{\frac12}\bigg( \gamma(z) \left( \ii\dd X^+(z) \right)^{\frac32} 
 -\frac{1}{2} \gamma(z)\psi^+\dd\psi^+(z) \left( \ii\dd X^+(z) \right)^{-\frac12} \nonumber \\
&\hspace{60mm}
 -\frac{1}{2z}c(z) \psi^+(z) \left( \ii\dd X^+(z) \right)^{\frac12}\bigg) \,,
\end{align}
\end{subequations}
where $X^+(z)$ is rescaled using (\ref{def:hbarrescaling}).
We note that the term including $c_0$ in $\Gamma_\frac12(\hbar)$ is proportional to
$\Psi_\frac12^+(\hbar)$ and, hence, does not appear in the metric (\ref{def:metric}).

\subsection{No-ghost theorem for $0$-picture NS string}\label{subsec:no-ghost}

Now, we are ready to prove the no-ghost theorem for the 0-picture NS string.
The following proposition is first proved as the basis of mathematical induction.
\begin{prop} 
Any state $\zket{\phi}\in{\cal V}_L$ satisfying $A\zket{\phi}=0$ can be written as
\begin{equation}
\zket{\phi} = P^{(0)} \zket{\phi} + A \zket{\rho} \,,
\end{equation}
where $P^{(0)}$ is the projection operator onto the subspace generated by the transverse modes
\begin{equation}
P^{(0)} 
=\sum_{\{k\},\{l\}}\prod_{n,i}\frac{1}{k_n^i!}(a_n^{i\dag})^{k_n^i}
\prod_{m,j}^{\longrightarrow}(\boldsymbol{\psi}_r^{j\dag})^{l_m^j}\zket{\rm tach}^{(0)}
{\mathstrut}^{(0)}\!\!\!\zbra{\rm tach} \metric^{(0)} 
\prod_{m,j}^{\longleftarrow}(\boldsymbol{\psi}_r^{j})^{l_m^j}
\prod_{n,i}\frac{1}{k_n^i!}(a_n^{i})^{k_n^i}\,,
\end{equation}
where the arrow above the symbol $\prod$ indicates the ordering of product.
\end{prop}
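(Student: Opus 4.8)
The plan is to prove this proposition by reducing it to the well-known lowest-order (free) analysis of the bosonic-type BRST operator, following Kato--Ogawa. The operator $A$ in \eqref{def:tQBA} is, up to the overall factor $p^+$, a nilpotent ``free'' BRST charge built only from the longitudinal oscillators $\alpha^\pm_n,\psi^\pm_r$ together with the reparametrization ghosts $b_n,c_n$ and the superconformal ghosts $\beta_r,\gamma_r$. The transverse oscillators $a^{i\dag}_n,\boldsymbol\psi^{i\dag}_r$ commute with $A$ and are $A$-closed but not $A$-exact, so they simply come along for the ride; the real content is the statement that the $A$-cohomology of the longitudinal/ghost sector, within $\mathcal V_L$, is one-dimensional, spanned by $\zket{\mathrm{tach}}^{(0)}$. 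First I would make the direct-product decomposition of $\mathcal V_L$ explicit: a state $\zket\phi\in\mathcal V_L$ is a sum of transverse excitations acting on $\zket{\mathrm{tach}}^{(0)}$ times states in the longitudinal-plus-ghost Fock space, so that $P^{(0)}$ is literally the projector onto the ``longitudinal-ghost vacuum'' factor, and it suffices to show $\zket\psi - \zbra{\mathrm{vac}}\metric^{(0)}\zket\psi\,\zket{\mathrm{vac}}$ is $A$-exact for any longitudinal-ghost state $\zket\psi$ with $A\zket\psi=0$.

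Next I would introduce a grading operator $N$ — the natural ``longitudinal oscillator number'' counting $\alpha^\pm_{-n},\psi^\pm_{-r},b_{-n},c_{-n},\beta_{-r},\gamma_{-r}$ with suitable weights — with respect to which $A$ has definite degree, and construct a contracting homotopy $K$ with $\{A,K\}=N$ on the relevant graded pieces, exactly as in the no-ghost theorem for the bosonic string. Concretely, $A=p^+\big(\sum_{n\ne0}c_{-n}\alpha^-_n+\sum_r\gamma_{-r}\psi^-_r\big)$ pairs $c_{-n}\leftrightarrow\alpha^+_n$-type and $\gamma_{-r}\leftrightarrow\psi^+_r$-type excitations; the homotopy $K$ is built from the conjugate combination $\sum_{n\ne0}b_{-n}(\alpha^+_n/\text{something})+\sum_r\beta_{-r}(\psi^+_r/\text{something})$, so that $\{A,K\}$ is a positive operator on the subspace of nonzero degree. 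Then any $A$-closed state of positive degree is $A$-exact: $\zket\psi=A(K\zket\psi/N)$. One subtlety particular to the $0$-picture is that $\gamma_{1/2}$ is a creation operator and $\beta_{-1/2}$ an annihilation operator, so the would-be vacuum $\zket{0,k_0}$ is \emph{not} in the cohomology — indeed $A\zket{0,k_0}=k^+\gamma_{1/2}\psi^-_{-1/2}\zket{0,k_0}\ne0$ — and the genuine bottom state of the cohomology is the tachyon $\hat\psi^-_{-1/2}\zket{0,k_0}$ at level $1/2$; the homotopy must be set up so that this state, and only this state (times transverse excitations), survives in degree zero / the kernel of $N$.

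The main obstacle I expect is precisely the careful treatment of the lowest graded pieces — the ``small'' Hilbert space subtleties around the $\beta_{-1/2},\gamma_{1/2},\psi^\pm_{\pm1/2},b_{-1},c_{-1}$ modes — where the homotopy $K$ is not simply invertible and one must check by hand that the only $A$-closed states not killed by $\{A,K\}$ are the transverse excitations of $\zket{\mathrm{tach}}^{(0)}$, i.e.\ that the quartet mechanism leaves exactly the state projected out by $P^{(0)}$. This amounts to diagonalizing $A$ on a small finite-dimensional space of fixed low degree and verifying that its cohomology there is one-dimensional and correctly normalized against $\metric^{(0)}$ from \eqref{def:metric0} — so that $P^{(0)}$ as written (with the $\metric^{(0)}$ sandwiched in) is genuinely the identity on cohomology. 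Once that base case is pinned down, the inductive extension to all degrees is the routine Kato--Ogawa argument, and the decomposition $\zket\phi=P^{(0)}\zket\phi+A\zket\rho$ follows with $\zket\rho=(1-P^{(0)})K\zket\phi/N$ on the complement of the kernel of $N$.
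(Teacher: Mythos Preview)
Your approach is the paper's: the Kato--Ogawa quartet/homotopy mechanism, with the $0$-picture subtlety localized at the $r=\pm\tfrac12$ modes. Where you leave that subtlety as a ``diagonalize by hand on a small space'' step, the paper resolves it structurally: it reinterprets the tachyon $\hat\psi^-_{-1/2}\zket{0,k_0}$ as the Fock vacuum (hole theory), so that $\hat\psi^-_{-1/2}$ becomes an \emph{annihilation} operator and $\hat\psi^+_{1/2}$ a creation operator, matching the already-exchanged roles of $\hat\beta_{-1/2},\hat\gamma_{1/2}$. The nontransverse modes then organize into genuine quartets $(\varphi,\tilde\varphi,\eta,\bar\eta)_n$ and $(\chi,\tilde\chi,\omega,\bar\omega)_r$ with $P^{(n)}=\{A,R^{(n)}\}$ for $n\ge1$, which is exactly your $\{A,K\}=N$ on each graded piece. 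The residual asymmetry --- the orphan creation operators $\chi^\dag_{\pm1/2},\omega^\dag_{\pm1/2}$ with no annihilation partners --- is absorbed by the \emph{degenerate} metric $\metric^{(0)}$: those operators create states in $\ker\metric^{(0)}$, so they drop out of $P^{(0)}$ and the completeness relation $\sum_nP^{(n)}=\mathbf 1$ still holds on $\mathcal V_L$. This is the conceptual point your proposal gestures at but does not name; once it is in place, no separate low-level diagonalization is needed.
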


The proof is almost the same as the standard one\cite{Kato:1982im,Ohta:1985af,Ito:1985qa}
but some important modification is needed. 
We, therefore, give a proof in some detail with careful attention to the differences.
\begin{proof}
As already mentioned, $\hat{\beta}_{-\frac12}$ and $\hat{\gamma}_{\frac12}$ exchange their role in 0-picture. 
In order to make it consistent with the quartet structure shown below,
the roles of $\hat{\psi}^-_{-\frac12}$ and $\hat{\psi}^+_\frac12$ must also be exchanged.
As well known in the hole theory, such an exchange occurs for the fermionic oscillator  
if the level is occupied. For the case considered here, it is accomplished by interpreting 
the tachyon state
\begin{equation}
\zket{\rm tach}^{(0)}=\hat{\psi}^-_{-\frac12}\zket{0,k_0}\,,  
\end{equation}
as the Fock vacuum  annihilated by $\hat{\psi}^-_{-\frac12}$.
The operator $\hat{\psi}^+_\frac12$ creates a hole on this vacuum.\footnote{
This hole has negative energy but causes no difficulty
because it is decoupled from the physical states as a quartet.}
In view of this, we define annihilation operators
\begin{subequations}\label{def:annihilations}
\begin{alignat}{3}
 \varphi_n&=-\frac{\ii}{\sqrt{n}p^+} \hat{\alpha}_n^+ \,, &\qquad
\eta_n&=\sqrt{n} \hat{c}_n \,, &\qquad (n&>0) \\
\bar{\eta}_n&=\frac{1}{\sqrt{n}} \hat{b}_n \,,&\qquad 
\tilde{\varphi}_n&=\frac{p^+}{\sqrt{n}} \hat{\alpha}_n^- \,,&\qquad (n&>0) \\
\chi_r&=\frac{1}{p^+} \hat{\psi}_r^+ \,,&\qquad 
\omega_r&=\hat{\gamma}_r \,,&\qquad (r&\ge\tfrac32)  \\
\bar{\omega}_r&=i \hat{\beta}_r  \,,&\qquad
\tilde{\chi}_r&=p^+ \hat{\psi}_r^- \,,&\qquad (r&\ge-\tfrac12)  
\end{alignat}
\end{subequations}
and creation operators
\begin{subequations}\label{def:creations}
\begin{alignat}{3}
 \varphi_n^\dag&=\frac{\ii}{\sqrt{n}p^+} \hat{\alpha}_{-n}^+ \,, &\qquad
\eta_n^\dag&=\sqrt{n} \hat{c}_{-n} \,, &\qquad (n&>0) \\
\bar{\eta}_n^\dag&=\frac{1}{\sqrt{n}} \hat{b}_{-n} \,,&\qquad 
\tilde{\varphi}_n^\dag&=\frac{p^+}{\sqrt{n}} \hat{\alpha}_{-n}^- \,,&\qquad (n&>0) \\
\chi_r^\dag&=\frac{1}{p^+} \hat{\psi}_{-r}^+ \,,&\qquad 
\omega_r^\dag&=\hat{\gamma}_{-r} \,,&\qquad (r&\ge-\tfrac12)  \\
\bar{\omega}_r^\dag&=i \hat{\beta}_{-r}  \,,&\qquad
\tilde{\chi}_r^\dag&=p^+ \hat{\psi}_{-r}^- \,.&\qquad (r&\ge\tfrac32)
\end{alignat}
\end{subequations}
Their (anti-) commutation relations with $A$ and themselves can be computed as
\begin{subequations}\label{def:quartets}
\begin{alignat}{2}
[A,\varphi_n]&=i\eta_n\,,&\qquad
\{A,\eta_n\}&=0\,,\\
\{A,\bar{\eta}_n\}&=\tilde{\varphi}_n\,,&\qquad
[A,\tilde{\varphi}_n]&=0\,,\\
\{A,\chi_r\}&=\omega_r\,,&\qquad
[A,\omega_r]&=0\,,\\
[A,\bar{\omega}_r]&=i\tilde{\chi}_r\,,&\qquad
\{A,\tilde{\chi}_r\}&=0\,,
\end{alignat}
and
\begin{equation}
 [\phi_i,\phi_j^\dag \}=
\genfrac{}{}{0mm}{}
{
\begin{matrix}
\hspace{5mm}\varphi_m^\dag\hspace{1.5mm}&\hspace{2mm}\tilde{\varphi}_m^\dag\hspace{1.5mm}&\hspace{2mm}
 \eta_m^\dag\hspace{1.5mm}&\hspace{1mm}\bar{\eta}_m^\dag\hspace{1mm}&\hspace{1mm}
 \chi_s^\dag\hspace{1mm}&\hspace{1mm}\tilde{\chi}_s^\dag\hspace{1.5mm}&\hspace{1.5mm}
 \omega_s^\dag\hspace{1.5mm}&\hspace{1mm}\bar{\omega}_s^\dag
\end{matrix}
}
{
\begin{matrix}
 \varphi_n \\
 \tilde{\varphi}_n \\
 \eta_n \\
 \bar{\eta}_n \\
 \chi_r \\
\tilde{\chi}_r \\
 \omega_r \\
 \bar{\omega}_r
\end{matrix}
\begin{pmatrix}
0 &-i\delta_{n,m} & & & & & & \\
i\delta_{n,m} &0 & & & & & & \\
 & &0 &\delta_{n,m} & & & & \\
 & &\delta_{n,m} &0 & & & & \\
 & & & &0 &\delta_{r,s} & & \\
 & & & &\delta_{r,s} &0 & & \\
 & & & & & &0 &i\delta_{r,s} \\
 & & & & & &-i\delta_{r,s} &0 
\end{pmatrix},
}
\end{equation}
\end{subequations}
respectively. 
These relations (\ref{def:quartets}) show that the nontransverse modes form two quartets
$(\varphi_n, \tilde{\varphi}_n, \eta_n, \bar{\eta}_n)$ and
$(\chi_n, \tilde{\chi}_n, \omega_n, \bar{\omega}_n)$.
Here, we must note that there is an asymmetry in the latter quartet. 
That is, annihilation operators $(\chi_{\pm\frac12}, \omega_{\pm\frac12})$ and
creation operators $(\bar{\omega}^\dag_{\pm\frac12}, \tilde{\chi}^\dag_{\pm\frac12})$
are missing. This is apparently inconsistent with the hermite conjugate relation 
(\ref{def:hermite_conjugate}). The states created by $(\chi^\dag_{\pm\frac12}, \omega^\dag_{\pm\frac12})$, 
however, are in the kernel of the metric $\metric^{(0)}$ and, hence, orthogonal to all the states 
in the Fock space. The remaining operators with $r\ge3/2$ are proper quartets including 
the inner product structure. The projection operator $P^{(n)}$ onto 
the $n$ quartet mode can be defined as
\begin{subequations}
\begin{align}
P^{(n)}=&
\frac1n\Bigg(\sum_k\left(i\tilde{\varphi}_k^\dag P^{(n-1)} \varphi_k
       -i\varphi_k^\dag P^{(n-1)} \tilde{\varphi}_k
       +\bar{\eta}_k^\dag P^{(n-1)} \eta_k
       +\eta_k^\dag P^{(n-1)} \bar{\eta}_k\right)\nn
&\hspace{12mm}
       +\sum_s\Big(\tilde{\chi}_s^\dag P^{(n-1)} \chi_s
       +\chi_s^\dag P^{(n-1)} \tilde{\chi}_s
       -i\bar{\omega}_s^\dag P^{(n-1)} \omega_s
       +i\omega_s^\dag P^{(n-1)} \bar{\omega}_s\Big)\Bigg)\,,
\nn
=&\{A, R^{(n)}\}\,,\qquad (n\ge1) \\
R^{(n)}=&\frac1n\Bigg(\sum_k\left(i\bar{\eta}_k^\dag P^{(n-1)} \varphi_k
   -i\varphi_k^\dag P^{(n-1)} \bar{\eta}_k\right)
  -\sum_s\Big(i\bar{\omega}_s^\dag P^{(n-1)}\chi_s
  -i\chi_s^\dag P^{(n-1)}\bar{\omega}_s\Big)\Bigg).
\end{align}
\end{subequations}
Using these projection operators, we can deduce that the general solution
to $A\zket{\phi}=0$ can be written as
\begin{align}
 \zket{\phi}&=\sum_{n=0}^\infty P^{(n)}\zket{\phi}\,,\nonumber\\
&=P^{(0)}\zket{\phi}+\sum_{n=1}^\infty\{A,R^{(n)}\}\zket{\phi}\,,\nonumber\\
&=P^{(0)}\zket{\phi}+A\zket{\rho}\,,
\end{align}
where $\zket{\rho}=\sum_{n=1}^\infty R^{(n)}\zket{\phi}$.
\end{proof}
We can prove the following second proposition by extending this result 
to all orders in $\hbar$.
\begin{prop} \label{prop:2}
Any state $\zket{\phi(\hbar)}\in{\cal V}_L$ satisfying $\tQB(\hbar)\zket{\phi(\hbar)}=0$ can be written as
\begin{equation} \label{eq:physstate-proj}
\zket{\phi(\hbar)} = P(\hbar) \zket{\phi(\hbar)} + \tQB(\hbar) \zket{\rho(\hbar)} 
\quad \exists \zket{\rho(\hbar)}\in {\cal V}_L \,,
\end{equation} 
where $P(\hbar)$ is the projection operator onto the subspace generated 
by the DDF operators $A_n^i(\hbar)$ and $B_r^i(\hbar)$,
\begin{align}
P(\hbar) 
=&\sum_{\{k\},\{l\}}\prod_{n,i}\frac{1}{k_n^i!}(A_n^{i\dag}(\hbar))^{k_n^i}
\prod_{m,j}^{\longrightarrow}(B_r^{j\dag}(\hbar))^{l_m^j}\zket{{\rm tach}(\hbar)}\nonumber\\
&\hspace{3cm}\times
\zbra{{\rm tach}(\hbar)} \metric(\hbar)
\prod_{m,j}^{\longleftarrow}(B_r^{j}(\hbar))^{l_m^j}
\prod_{n,i}\frac{1}{k_n^i!}(A_n^{i}(\hbar))^{k_n^i}\,.
\end{align}
\end{prop}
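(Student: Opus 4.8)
The plan is to prove Proposition \ref{prop:2} by induction on the order in the formal parameter $\hbar$, using Proposition 1 as the input at the leading order. Expand $\zket{\phi(\hbar)}=\sum_{n\ge0}\hbar^n\zket{\phi}^{(n)}$, $\tQB(\hbar)=A+\hbar B+\hbar^2 C$, $P(\hbar)=\sum_{n\ge0}\hbar^nP^{(n)}$ and $\metric(\hbar)=\sum_{n\ge0}\hbar^n\metric^{(n)}$, and set $\zket{\chi(\hbar)}:=(1-P(\hbar))\zket{\phi(\hbar)}$. Since the image of $P(\hbar)$ is spanned by the DDF states built on $\zket{{\rm tach}(\hbar)}$, the assertion \eqref{eq:physstate-proj} is equivalent to showing that $\zket{\chi(\hbar)}$ is $\tQB(\hbar)$-exact with a parameter $\zket{\rho(\hbar)}$ staying in ${\cal V}_L$, and this is what I would establish. (Throughout, $[A,L]=[B,L]=[C,L]=0$ and the fact that $P(\hbar)$ maps into the DDF subspace keep all intermediate states inside ${\cal V}_L$.)

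First I would record three properties of $P(\hbar)$. \emph{(i)} $P(\hbar)^2=P(\hbar)$: since $\metric(\hbar)$ commutes with $A^{i\dag}_n(\hbar)$ and $B^{i\dag}_r(\hbar)$ by \eqref{cond:metric-DDF} — hence, taking hermitian conjugates, also with $A^i_n(\hbar)$ and $B^i_r(\hbar)$ — since the DDF operators obey the canonical relations \eqref{def:ccr} and annihilate $\zket{{\rm tach}(\hbar)}$, and since $\zbra{{\rm tach}(\hbar)}\metric(\hbar)\zket{{\rm tach}(\hbar)}>0$ by \eqref{cond:metric-norm} (normalized to $1$), the DDF states form a $\metric(\hbar)$-orthonormal system. \emph{(ii)} $\tQB(\hbar)P(\hbar)=0$: each DDF state is $\tQB(\hbar)$-closed, because the DDF operators (anti)commute with $\tQB(\hbar)$ and $\tQB(\hbar)\zket{{\rm tach}(\hbar)}=0$. \emph{(iii)} $P(\hbar)\tQB(\hbar)=0$: writing the bra in each term of $P(\hbar)$ as $\zbra{{\rm tach}(\hbar)}$ followed by DDF annihilators followed by $\metric(\hbar)$, one moves $\tQB(\hbar)$ to the left through $\metric(\hbar)$ by \eqref{cond:metric-tQB} and through the DDF annihilators by (anti)commutativity, and then the hermiticity of $\tQB(\hbar)$ (inherited from that of $\QB$) together with $\tQB(\hbar)\zket{{\rm tach}(\hbar)}=0$ gives $\zbra{{\rm tach}(\hbar)}\tQB(\hbar)=0$. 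From \emph{(i)} and \emph{(ii)} it follows that $\tQB(\hbar)\zket{\chi(\hbar)}=0$ and $P(\hbar)\zket{\chi(\hbar)}=0$.

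Then I would run the induction. For the inductive step, assume $\zket{\rho}^{(0)},\dots,\zket{\rho}^{(N-1)}\in{\cal V}_L$ have been produced with the remainder $\zket{\tilde\chi(\hbar)}:=\zket{\chi(\hbar)}-\tQB(\hbar)\sum_{k=0}^{N-1}\hbar^k\zket{\rho}^{(k)}$ of order $\hbar^N$ (for $N=0$ this is just $\zket{\chi(\hbar)}$), with leading coefficient $\zket{\tilde\chi}^{(N)}\in{\cal V}_L$. Applying $\tQB(\hbar)$ and using $\tQB(\hbar)\zket{\chi(\hbar)}=0$ and $\tQB(\hbar)^2=0$ on ${\cal V}_L$, the $\hbar^N$ component gives $A\zket{\tilde\chi}^{(N)}=0$; applying $P(\hbar)$ and using $P(\hbar)\zket{\chi(\hbar)}=0$ and $P(\hbar)\tQB(\hbar)=0$, the $\hbar^N$ component gives $P^{(0)}\zket{\tilde\chi}^{(N)}=0$. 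Proposition 1 then delivers $\zket{\tilde\chi}^{(N)}=A\zket{\rho}^{(N)}$ with $\zket{\rho}^{(N)}=\sum_{n\ge1}R^{(n)}\zket{\tilde\chi}^{(N)}\in{\cal V}_L$, and subtracting $\hbar^N\tQB(\hbar)\zket{\rho}^{(N)}$ pushes the remainder to order $\hbar^{N+1}$. Since $\hbar$ is a formal parameter, the iteration defines $\zket{\rho(\hbar)}=\sum_{N\ge0}\hbar^N\zket{\rho}^{(N)}\in{\cal V}_L$ with $\zket{\chi(\hbar)}=\tQB(\hbar)\zket{\rho(\hbar)}$, which is \eqref{eq:physstate-proj}.

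The hard part is step \emph{(iii)}, $P(\hbar)\tQB(\hbar)=0$ — equivalently, that $\tQB(\hbar)$-exact states carry no DDF content — because this is precisely where the \emph{nontrivial}, degenerate metric \eqref{def:metric0}, \eqref{def:metric_closed_form} is indispensable: with a naive or trivial metric the cancellation fails, $P^{(0)}\zket{\tilde\chi}^{(N)}$ need not vanish, and the induction stalls. In carrying this out I would also check that the degeneracy of $\metric^{(0)}$ — the hallmark of the asymmetry of the second quartet $(\chi_r,\tilde\chi_r,\omega_r,\bar\omega_r)$, whose $r=\pm\tfrac12$ modes are missing — does not spoil $P(\hbar)$ being a genuine projector (it does not: the kernel directions are not DDF states and $\metric(\hbar)\zket{{\rm tach}(\hbar)}\neq0$). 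The remaining inputs — that $\tQB(\hbar)^2$ vanishes on ${\cal V}_L$, that $A^i_n(\hbar)$ and $B^i_r(\hbar)$ annihilate $\zket{{\rm tach}(\hbar)}$, and that the $R^{(n)}$ of Proposition 1, built from the rescaled oscillators $\hat{\varphi}_{-q}$ of \eqref{def:rescaling-oscillators} which commute with $L$, keep $\zket{\rho(\hbar)}$ in ${\cal V}_L$ — are routine.
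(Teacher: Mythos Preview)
Your proposal is correct and follows precisely the route the paper invokes: the paper's own proof consists of a single sentence pointing to the Kato--Ogawa argument and stating that the order-by-order equations in $\hbar$ are identical, and what you have written is exactly that inductive $\hbar$-expansion, with Proposition~1 supplying the base step and the properties \emph{(i)}--\emph{(iii)} of $P(\hbar)$ driving the recursion. Your explicit isolation of property \emph{(iii)} as the place where the nontrivial metric $\metric(\hbar)$ enters is a useful addition but does not change the method.
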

\begin{proof}
The proof given in the case of the bosonic string\cite{Kato:1982im} still remains valid,
because all the equations obtained by expanding in $\hbar$ are identical.
\end{proof}

Using the Proposition \ref{prop:2}, the no-ghost theorem can be easily proved.
That is, putting $\hbar=1$ in (\ref{eq:physstate-proj}), we can see that
any physical state has a positive semidefinite norm:
\begin{align}
\big\| \zket{\phi} \big\| &= 
\zbra{\phi} \metric \zket{\phi} \nonumber\\
&=\zbra{\phi}P^\dag\metric P\zket{\phi}\,,\nonumber\\
&=\zbra{\phi}P^\dag\metric\zket{\phi}= \zbra{\phi} \metric P \zket{\phi} \geq 0 \,,
\end{align}
using the (anti-)commutation relation (\ref{def:ccr}) and the fact that
the metric $\metric(\hbar)$ commutes with DDF operators (\ref{def:DDF}).
Hence, we conclude that the physical subspace of the 0-picture NS string
defined by (\ref{def:phys_state_cond}) is free from ghost states.

\section{Discussion}\label{sec:disc} 
%
%

In this paper, we constructed a nontrivial metric to define a consistent 
inner product with the BRST cohomology in 0-picture.
Inspired by this metric, we find a new BRST invariant inverse picture-changing operator
\begin{subequations}
\begin{equation}
Y_{\rm nc}(z) = \frac{\psi^+(z)}{i\partial X^+(z)}\delta\left(\Gamma\right(z))\,,
\end{equation}
with
\begin{equation}
 \Gamma(z)= \gamma(z) + c\frac{\partial\psi^+}{i\partial X^+}(z)\,.
\end{equation}
\end{subequations}
This $Y_{\rm nc}(z)$ is a conformal primary operator and an inverse of $X(z)$ in the sense of 
\begin{equation}
 \lim_{z\rightarrow w}Y_{nc}(z)X(w)=1\,.
\end{equation}
This new picture-changing operator is unfortunately Lorentz noncovariant but
has an important advantage. That is, it has a finite 
operator product with itself:
\begin{equation}
\lim_{z\to w}Y_{\rm nc}(z)Y_{\rm nc}(w) 
=\frac{\partial\psi^+\psi^+}{(i\partial X^+)^2}\delta(\partial\Gamma)\delta(\Gamma)(w)
\equiv (Y_{\rm nc})^2(w)\,.\label{eq:non-sing_ope}
\end{equation}
We can similarly obtain
\begin{align}
 (Y_{\rm nc})^n(z) &= \frac{\partial^{n-1}\psi^+\cdots\partial\psi^+\psi^+}{(i\partial X^+)^n}
\delta(\partial^{n-1}\Gamma)\cdots\delta(\partial\Gamma)\delta(\Gamma)(z)\,.
\end{align}
This yields a possibility to construct a new SSFT described by the action
\begin{equation}
S= \frac12 \int Y_{\rm nc}^2 A * \QB A 
+ \frac13\int Y_{\rm nc}^2 A * A * A 
+ \frac12 \int Y_{\rm nc}^3 \Psi * \QB \Psi 
+ \int Y_{\rm nc}^3 A * \Psi * \Psi \,,
\end{equation}
where  $A$ and $\Psi$ are 0-picture NS and $\frac{1}{2}$-picture Ramond 
string fields, respectively.
The operator $Y_{\rm nc}^n$ is inserted at the midpoint.  
This action is invariant under the gauge transformation
\begin{align}
\begin{split}
 \delta A=& Q_B\lambda+A*\lambda-\lambda*A+Y_{\rm nc}(\Psi*\epsilon-\epsilon*\Psi),\\
 \delta\Psi=& Q_B\epsilon+\Psi*\lambda-\lambda*\Psi+A*\epsilon-\epsilon*A,
\end{split}
\end{align}
where $\lambda$ and $\epsilon$ are 0-picture NS and $\frac12$-picture Ramond gauge 
transformation parameters, respectively. 
Owing to the nonsingular operator product (\ref{eq:non-sing_ope}),
this infinitesimal transformation can be integrated to the finite gauge transformation.

While this new SSFT is not manifestly Lorentz covariant,
the noncovariance is expected to disappear in the physical quantities, 
which should be independent of the choice of the picture-changing operator.
We hope that this new formulation gives a solution to 
the second difficulty\cite{Kroyter:2009zi} mentioned in \S\ref{sec:introduction}. 
This new SSFT is now under investigation.\cite{KKKM}

\section*{Acknowledgements}
The authors would like to thank Taichiro Kugo for useful discussions.
This work was supported in part by the Grant-in-Aid for the Global COE
program ``The Next Generation of Physics, Spun from Universality and Emergence''
from the Ministry of Education, Culture, Sports, Science and Technology (MEXT) of Japan.
The work of HK was supported in part by the Grant-in-Aid for Scientific Research
No. 19540284 from MEXT of Japan. The works of MK (No. 21-2291) and MM (No. 21-173) were
supported by Grants-in-Aid for Japan Society for the Promotion of Science (JSPS) Fellows.

\appendix
\section{Oscillator Conventions}\label{app:one} 

In this paper, we use oscillator conventions
\begin{subequations}
\begin{alignat}{2}
X^\mu(z,\bar{z})&=X^\mu(z)+X^\mu(\bar{z}),\nonumber\\
X^\mu(z)&=\frac{1}{\sqrt{2\ap}}x^\mu-i\sqrt{2\ap}&p^\mu\log z
&+i\sum_{n\ne0}\frac{\alpha^\mu_n}{n}z^{-n}, \\
\psi^\mu(z)&=\sum_{r} \psi^\mu_r z^{-r-\frac12},& &\\
b(z)&=\sum_n b_n z^{-n-2}, &
c(z)&=\sum_n c_n z^{-n+1}, \\
\beta(z)&=\sum_{r} \beta_r z^{-r-\frac32}, &
\gamma(z)&=\sum_{r} \gamma_r z^{-r+\frac12},
\end{alignat}
\end{subequations}
where $n\in{\mathbb Z}$ and $r\in{\mathbb Z}+\frac12$.
We also define a zero-mode $\alpha^\mu_0=\sqrt{2\ap}p^\mu$.
The (anti-) commutation relations of these operators are
\begin{subequations}
\begin{alignat}{2}
[x^\mu,p^\nu]&=i\eta^{\mu\nu},& &\\
 [\alpha_m^\mu,\alpha_n^\nu]&=m\eta^{\mu\nu}\delta_{n+m,0},&\qquad
 \{\psi^\mu_r,\psi^\nu_s\}&=\eta^{\mu\nu}\delta_{r+s,0},\\
 \{b_m,c_n\}&=\delta_{n+m,0},& [\gamma_r,\beta_s]&=\delta_{r+s,0},
\end{alignat}
\end{subequations}
with the space-time metric $\eta^{\mu\nu}={\rm diag}(-1,+1,\cdots,+1)$.
These oscillators satisfy the hermiticity relations
\begin{alignat}{3}
 (x^\mu)^\dag&=x^\mu\,,&\qquad (p^\mu)^\dag&=p^\mu\,,&\qquad\nonumber\\
 (\alpha^\mu_n)^\dag&=\alpha^\mu_{-n}\,,&\qquad (b_n)^\dag&=b_{-n}\,,&\qquad (c_n)^\dag&=c_{-n}\,,
\label{def:hermite_conjugate}\nn
 (\psi^\mu_r)^\dag&=\psi^\mu_{-r}\,,&\qquad
 (\beta_r)^\dag&=-\beta_{-r}\,,&\qquad (\gamma_r)^\dag&=\gamma_{-r}\,.
\end{alignat}

\section{General Picture and Its Inner Product}\label{app:two}

In this appendix, we consider the ground state of half-integer oscillator modes 
in general $l$-picture defined by
\begin{align}\label{def:l-pic_ground_state}
\begin{split}
 \beta_r\ket{l}&=0\, ,\qquad {\rm for}\quad r\ge -l-\frac12\, ,\\
 \gamma_r\ket{l}&=0\, ,\qquad {\rm for}\quad r\ge l+\frac32\, ,
\end{split}
\end{align}
with $l\in\mathbb{Z}$. Note that these \textit{ground} states 
are not the lowest-energy states except in natural picture
because a number of oscillators create negative energy excitations.
The ground state of the nonzero modes, introduced 
in (\ref{eq:0vac}) for $l=0$, is the direct product of this $\ket{l}$ and the ground state
$\ket{0}$ of integer oscillator modes defined by
\begin{equation}
 \alpha_n^\mu \ket{0}=b_n \ket{0}=c_n \ket{0} = 0 
\qquad {\rm for} \quad n > 0 \,, 
\end{equation}
as $\ket{0}_l=\ket{0}\otimes\ket{l}$.
Except in ($-1$)-picture, these ground states do not have a nonzero norm without a nontrivial metric
owing to the picture number anomaly.

A natural inner product is defined between the ground states in $l$- and $(-l-2)$-pictures:
\begin{equation}
 \braket{-l-2}{l}=1.\label{def:natural_inner_product}
\end{equation}
This inner product can be extended to the whole Fock space
because two Fock spaces are isomorphic and can be obtained by exchanging $\gamma_{-r}$
and $\beta_{-r}$ as seen from (\ref{def:l-pic_ground_state}). Two Fock spaces are dual
with respect to this natural inner product. In order to define the norm, however,
we cannot use this isomorphism because the BRST charge is not invariant under this exchange.

We can construct a map between the ground states in any two pictures using delta function
$\delta(\gamma_r)$ or $\delta(\beta_r)$. For example, the ground state in $(l\mp1)$-picture can be 
obtained from the one in $l$-picture as
\begin{subequations}\label{def:pictures}
\begin{align}
 \delta(\gamma_{l+\frac12})\ket{l}&=\ket{l-1},\\
 \delta(\beta_{-l-\frac32})\ket{l}&=\ket{l+1}.
\end{align}
\end{subequations}
One can easily show that this mapping is consistent with the condition (\ref{def:l-pic_ground_state}) 
using the identity 
\begin{subequations}\label{def:delta_relations}
\begin{alignat}{2}
\gamma_{l+\frac12}\delta(\gamma_{l+\frac12})&=0,&\qquad [\beta_{-l-\frac12},\delta(\gamma_{l+\frac12})]\ne0\,,\\
\beta_{-l-\frac32}\delta(\beta_{-l-\frac32})&=0,&\qquad [\gamma_{l+\frac32},\delta(\beta_{-l-\frac32})]\ne0\,.
\end{alignat}
\end{subequations}
The extension to the map between any two pictures is straightforward. 
Using these maps, the natural inner product (\ref{def:natural_inner_product}) can be rewritten
as an inner product between states in a picture with a nontrivial metric. 
\begin{subequations}\label{def:inner_product_with_metric}
\begin{alignat}{3}
 \bra{l}\delta(\gamma_{-l-\frac12})\cdots\delta(\gamma_{l+\frac12})\ket{l}&=1,&\qquad
{\rm for}&\qquad l>0\,,\\
 \bra{l}\delta(\beta_{l+\frac32})\cdots\delta(\beta_{-l-\frac32})\ket{l}&=1,&\qquad
{\rm for}&\qquad l<-1\,. 
\end{alignat}
\end{subequations}
One can also give a rigorous definition of these delta functions of oscillator modes
using the relations (\ref{def:pictures}) and the natural inner product 
(\ref{def:natural_inner_product}):
\begin{subequations}
\begin{align}
 \delta(\gamma_{l+\frac12})&=\ket{l-1}\bra{-l-2}\,,\\
 \delta(\beta_{-l-\frac32})&=\ket{l+1}\bra{-l-2}\,.
\end{align}
\end{subequations}

The delta function of a local field can also be defined using these delta functions of oscillator modes. 
Let us explain it by taking $\delta(\beta(z))$ as an example. This definition depends on the picture 
of the state on which the delta function acts. For example, if it acts on the state in natural picture,
it is defined by expanding around $\beta_{-\frac12}$ as
\begin{equation}
 \delta(\beta(z))=\sum_{n=0}^\infty\frac{1}{n!}\Big(\sum_{r\ne-\frac12}\beta_rz^{-r-\frac12}\Big)^n
z\delta(\beta_{-\frac12})^{(n)},
\end{equation}
with
\begin{align}
 \delta(\beta_{-\frac12})^{(n)}&=[\gamma_\frac12,[\gamma_\frac12,\cdots,[\gamma_\frac12,
\delta(\beta_{-\frac12})]]]\,,\nonumber\\
&=\sum_{m=0}^n\frac{(-1)^mn!}{(n-m)!m!}
(\gamma_\frac12)^{n-m}\ket{0}\bra{-1}(\gamma_\frac12)^m\,.
\end{align}

On the other hand, the theta function in (\ref{def:pco_X}) has no precise definition.
This is a realization of the fact that the picture-changing operator $X(z)$ is nontrivial,
while it is written in the BRST exact form.


\end{document}